\newtheorem{definition}{Definition} [section]
\newtheorem{theorem}{Theorem}  [section]
\begin{document}
\title{Blockchain-based  Edge Resource Sharing for Metaverse
}

\author{
\thanks{This work is partly supported by the US NSF under grant CNS-2105004.}
\IEEEauthorblockN{Zhilin Wang$^{\dag}$, Qin Hu$^{\dag}$ (Corresponding author), Minghui Xu$^{\ddag}$, Honglu Jiang$^\S$}\\
\IEEEauthorblockA{$^\dagger$Department of Computer \& Information Science, 
Indiana University-Purdue University Indianapolis,  USA \\}
\IEEEauthorblockA{$^{\ddag}$School of Computer Science \& Technology, Shandong University, China\\}
\IEEEauthorblockA{$^\S$Department of Computer Science \& Software Engineering, Miami University,  USA \\}
Email: \{wangzhil, qinhu\}@iu.edu, mhxu@sdu.edu.cn, jiangh34@miamioh.edu
}

\maketitle

\begin{abstract}
Although Metaverse has recently been widely studied, its practical application still faces many challenges. One of the severe challenges is the lack of sufficient resources for computing and communication on local devices, resulting in the inability to access the Metaverse services. To address this issue, this paper proposes a practical blockchain-based mobile edge computing (MEC) platform for resource sharing and optimal utilization to complete the requested offloading tasks, given the heterogeneity of servers' available resources and that of users' task requests. 
To be specific, we first elaborate the design of our proposed system and then dive into the task allocation mechanism to assign offloading tasks to proper servers. To solve the multiple task allocation (MTA) problem in polynomial time, we devise a learning-based algorithm. Since the objective function and constraints of MTA are significantly affected by the servers uploading the tasks, we reformulate it as a reinforcement learning problem and calculate the rewards for each state and action considering the influences of servers.
Finally, numerous experiments are conducted to demonstrate the effectiveness and efficiency of our proposed system and algorithms.

\end{abstract}

\begin{IEEEkeywords}
Metaverse, mobile edge computing, blockchain, reinforcement learning
\end{IEEEkeywords}

\section{Introduction}
Metaverse, which is considered as the next generation of the Internet, has attracted researchers' attention recently~\cite{xu2022full, mystakidis2022metaverse}. In Metaverse, people can interact with the virtual world through technologies like virtual reality and augmented reality. Currently, people typically access the servers of Metaverse providers through Metaverse local devices, such as head-mounted glasses, to get the Metaverse services. These devices are required to assist users in accessing and experiencing Metaverse services, as well as to process resource-intensive computing tasks locally. 
However, using Metaverse local devices for computing faces severe challenges: 1) the computing and communication resources of the devices are limited; 2) the locations of devices performing local computing are dispersed and the devices may be constantly moving.



Fortunately, there is one existing solution for addressing these challenges, namely mobile edge computing (MEC)~\cite{abbas2017mobile, huda2022survey}. Specifically, local devices can offload their computing tasks to proxy MEC servers, e.g., base stations; then, the MEC servers finish those tasks and return the results to local devices. Since the MEC servers are located close to local devices and usually have enough computing resources, their involvement can reduce the latency of communication and computing, thus providing low-latency and high-quality Metaverse services.  


In practice, the MEC servers in Metaverse are usually responsible for computing multiple tasks from different users. Since their resources are not infinite, one single MEC server may not be able to handle those offloading tasks in time, leading to low quality of Metaverse services. One possible solution is to involve other MEC servers with extra resources to work together for offloading computing in Metaverse, which makes it necessary to establish a secure resource trading platform 
for edge servers. To that aim, we propose a blockchain system running on MEC servers to form a distributed computing framework, named blockchain-based MEC platform, 
which enables resource integration and optimal utilization among MEC servers in a trustless environment.

Currently, there is no existing study focusing on implementing blockchain-based MEC for resource sharing and optimization in Metaverse. 
Although there exist several studies about MEC in Metaverse, they focus on the latency analysis~\cite{dhelim2022edge} and incentive mechanism design~\cite{xu2021wireless}. While for research about blockchain-based MEC~\cite{sheng2020near, zhaofeng2019blockchain}, no one has considered solving the resource sharing problem in MEC.

To fill this gap, the proposed blockchain-based MEC platform aims to assist resource sharing and optimization in Metaverse via trading offloading tasks in a transparent but secure way, so that more offloading tasks from Metaverse users can be finished timely. This platform comprises multiple Metaverse users, MEC servers, and a consortium blockchain system running the practical Byzantine fault tolerance (PBFT) consensus protocol~\cite{castro1999practical, xu2021concurrent}. And there are four main procedures in the proposed system, i.e., \textit{data submission, task allocation, offloading computing}, and \textit{payment of tasks}. As the pivotal step, task allocation faces a critical challenge brought by the heterogeneity of multi-task requests from users. Specifically, given the limited computing and communication resources of each MEC server and offloading tasks with different price policies, data sizes, and completion time requirements, our proposed system needs to assign the requested multiple tasks to multiple servers under various constraints. In addition, since the offloading tasks are time-sensitive,  the task allocation step is expected to make the (near) optimal decisions as fast as possible so as to reduce the latency of the whole system.


To address these challenges, we design a learning-based task allocation mechanism. Specifically, we first formulate the task allocation issue as an integer-programming problem; then, we analyze the utilities of different decisions by considering the resource constraints of servers and the time requirements of tasks; according to our analysis, the multiple tasks allocation (MTA) problem is NP-complete, and the time complexity is too high, so we design a learning-based algorithm to find its approximate optimal value in polynomial time. 
Since the objective function and constraints of the MTA problem are conditional, which are affected by the servers uploading the tasks to blockchain, in our proposed learning-based solution, we first transform the MTA problem into a reinforcement learning problem and then calculate the rewards for each state and action  considering the influences of the server source of tasks. 

To the best of our knowledge, we are the first to implement blockchain-based MEC in Metaverse. The main contributions  of this paper are summarized as below:
\begin{itemize}
    \item We are the first to propose a practical blockchain-based MEC platform for resource sharing and optimal utilization in Metaverse. Our proposed system can satisfy the demands of low latency, multiple requests, energy efficiency, incentive compatibility, and data privacy protection for Metaverse users.
    \item We design a task allocation mechanism for edge resource sharing. Under the heterogeneous constraints, we first formulate an MTA problem for reasonably distributing offloading tasks among multiple MEC servers.
    \item We propose a learning-based solution to find the approximate optimal solution for the MTA problem with polynomial time complexity. We speed up the learning process via identifying available actions 
    for each state.
    \item We conduct extensive experiments to verify the effectiveness, efficiency, and validity of our proposed system, mechanisms, and algorithms. 
\end{itemize}


\label{related}

\section{System Model}
\label{system}


\subsection{System Overview}

Our proposed blockchain-based edge resource sharing platform is illustrated in Fig.~\ref{fig_sys} for supporting the Metaverse applications, consisting of mobile devices as Metaverse users, MEC servers, and the consortium blockchain running with practical Byzantine fault tolerance (PBFT) consensus. Here we assume that users in our considered system are devices with offloading requests to MEC servers. Based on the arrival time of offloading tasks in the blockchain network, we assign a specific task number to each of them. 
Specifically, we define $\mathcal{T}=\{t_1,\cdots,t_j,\cdots,t_m\}$ as the set of offloading tasks from users 
with $m$ denoting the number of all offloading tasks. 
Let $\mathcal{S}=\{s_1,\cdots,s_i,\cdots,s_n\}$ denote the set of MEC servers in the Metaverse
with $n$ being the total number of servers. 

\begin{figure}[h!t]
\centering
\includegraphics[width=\linewidth]{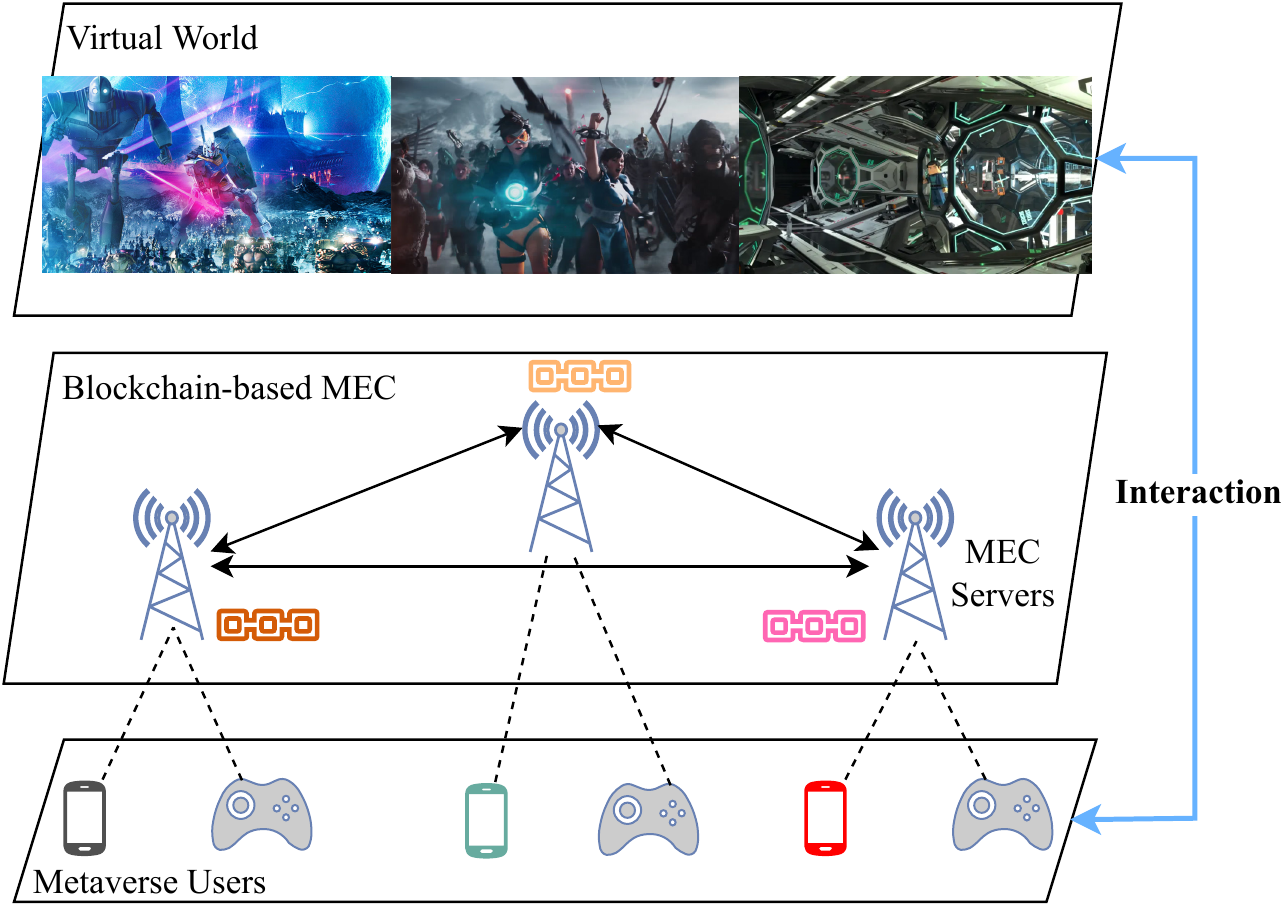}
\caption{The illustration of our proposed system.} 
\label{fig_sys}
\end{figure}

The workflow of our proposed system can be described as below: 1) \textit{Data Submission}: The Metaverse users upload their raw data and the descriptions of tasks to their nearest MEC servers. We use $R_j<p_j, D_j, \tau_{e,j}>$ to denote the description of task $t_j$, 
where $p_j$ is the unit price of per CPU cycle of computing for finishing $t_j$,  $D_j$ is the data size of $t_j$, and $\tau_{e,j}$ is the time requirement for finishing $t_j$. 2) \textit{Task Allocation}: Once server $s_i$ receives the data and $R_j$ from users, the data is stored locally and $R_j$ is submitted to the blockchain for task allocation which will be elaborated in the following sections. 3) \textit{Offloading Computing}: According to the task allocation results, server $s_i$ will either send the data of tasks it received to appropriate servers or finish the computing of the tasks locally. Then, the servers start to work on the allocated tasks and will broadcast the computing results to the blockchain network when the tasks are completed. 4) \textit{Payment of Tasks}: The users first pay offloading computing fees through the blockchain network where they can also get the computing results later. Each server can get its own payment through the coordination of the blockchain. 

To protect the privacy of the computing results, we can employ the asymmetric encryption technique, such as Rivest–Shamir–Adleman (RSA) algorithm~\cite{izdemir2021rivest}, in this system. Specifically, at the beginning of each round, the users submit the public keys generated by RSA along with the offloading task requests while keep their private keys secret. After the tasks are finished by the MEC servers, the computing results will be encrypted with the public keys of the corresponding users. Then, the users can download the encrypted results via accessing the blockchain
and decrypt the results with their private keys.
Since the amount of tasks received by each server can be different, we define the tasks submitted by server $s_i$ as $\mathcal{T}_{s,i}=\{t_1,t_2,\cdots,t_{m'}\}$, where $m'\leq m$ is the number of tasks from $s_i$. For example, if  $t_1$, $t_3$, and $t_{20}$ are submitted by $s_1$, then we have $\mathcal{T}_{s, 1}=\left\{t_1,t_3,t_{{20}}\right\}$.

\subsection{Consortium Blockchain}

The consortium blockchain is running on the MEC servers, 
where all servers are authorized nodes and thus can be trusted. Besides, as mentioned earlier, we implement PBFT, a lightweight consensus protocol in blockchain, to assist servers in reaching consensus for generating blocks to help disseminate task descriptions and allocation results, as well as enforce payment distribution 
in a round-by-round manner. 

Here is the workflow of the blockchain network in our system: 1) At the beginning of each round, each server publishes its available resources for finishing tasks, including both computing and communication resources (see Section~\ref{model} for details), and uploads received task descriptions $R_j$ to the blockchain network. 2) All of the above-mentioned information will be broadcast in the blockchain network so that the leader node can allocate the received tasks to appropriate servers with the help of the task allocation mechanism detailed in the next section. Then the allocation decisions, resource information of the MEC servers, and task descriptions will be packaged into a new block. 3) If $s_i$ is assigned to process the tasks submitted by itself, it can start processing right away; otherwise, it has to transfer the received data to other servers for finishing the offloading tasks. 4) After tasks are finished, the results will be broadcast so as to be recorded on the blockchain. 5) The users can get the results from the blockchain, and the servers will be paid accordingly.



Overall, the blockchain offers a decentralized and trusted platform to conduct resource sharing for MEC empowered Metaverse applications, thus enabling offloading computing for Metaverse users to overcome the challenges posed by the insufficient computing power of individual server and device mobility. With the combination of MEC and blockchain, our proposed resource sharing platform can handle multiple offloading tasks in a efficient and distributed manner. However, regarding the pivotal step, we need to design a task allocation mechanism to assign offloading tasks to proper MEC servers so as to achieve the optimal resource utilization while satisfying user requests, which will be discussed in the next section.

\label{mechanism}

\section{Design of Task Allocation Mechanism }
\label{model}
In this section, we detail the design of task allocation mechanism by modeling all possible computing and communication costs for $s_i$

\subsection{Computing Cost Model}
According to~\cite{wang2022incentive}, if task $t_j$ is allocated to server $s_i$, 
its energy cost can be calculated as $E_{i,j}^\text{comp}=\alpha_i \mu_{i,j} f_{i}^2$,
where $\alpha_i$ is the parameter related to the architecture of CPU, $\mu_{i,j}=D_j \theta_{i}$ is the total CPU cycles required to finish task $t_j$ on the MEC server $s_i$ with $\theta_{i}$ being the CPU cycles required to process unit data sample, and $f_i$ is the CPU frequency used to compute the offloading tasks. Besides, the time consumption of computing $t_j$ on $s_i$ can be calculated by $T_{i,j}^\text{comp}=\frac{\mu_{i,j}}{f_{i}}$.
Let $\mu_i$ be the maximum available computing capacity of $s_i$. Then $\alpha_i, \theta_i, f_i$ and $\mu_i$ will be submitted to the blockchain as the available computing resource of $s_i$ for task allocation.

\subsection{Communication Cost Model} 
If $t_j$ is submitted by $s_i$ but not finally assigned to $s_i$, $s_i$ would only be responsible for transmitting the data of $t_j$ to the determined server.
We define $B_{i}$ as the allocated bandwidth of $s_i$ for data transmission; let $H_{i}$ and $G_{i}$ be the transmission power and channel power gain, respectively.
Based on Shannon Bound, we can get the data transmission rate as $r_{i}=B_{i}\log_2 (1+\frac{H_{i} G_{i}}{\delta^2})$,
where $0\leq \delta \leq 1$ is the Gaussian noise during the transmission. Then we can calculate the transmission time as $T_{i,j}^\text{comm}=\frac{D_j}{r_{i}}$.
And the energy consumption of data transmission can be calculated by $E_{i,j}^\text{comm}=H_{i} T_{i,j}^\text{comm}$.
Note that $B_i, H_i$, and $G_i$ are uploaded to the blockchain network as the available communication resource of $s_i$.


\subsection{Utility Model} 
After $t_j$ is completed, server $s_i$ will receive the payment from users, denoted by $p_j\mu_{i,j}$ for task $t_j$, which is the product of unit price per CPU cycle and the total number of consumed CPU cycles. Then, the utility of $s_i$ regarding finishing $t_j$ in the system is the difference between the received payment and the local cost. 

Since there are two possible task allocation results for any server $s_i$ with respect to processing task $t_j$, i.e., computing $t_j$ locally or not, which will significantly affect the utility of $s_i$, we define an indicator $\mathds{1}_{i,j}$ to capture this:  if $t_j$ is assigned to $s_i$, then $\mathds{1}_{i,j}=1$; otherwise, $\mathds{1}_{i,j}=0$.
If $t_j \in \mathcal{T}_{s,i}$, the utility can be expressed as $U_{i,j}^{I}= (p_j \mu_{i,j}-E_{i,j}^\text{comp}) \mathds{1}_{i,j}+(\lambda p_j \mu_{i,j}-E_{i,j}^\text{comm})(1-\mathds{1}_{i,j})$, 
where $\lambda p_j \mu_{i,j}$ is the intermediary fee paid by the server finishing $t_j$, with $\lambda$ being a predefined constant parameter that can be known globally;
and the total time consumption is $T_{i,j}^I=T_{i,j}^\text{comp}\mathds{1}_{i,j}+T_{i,j}^\text{comm} (1-\mathds{1}_{i,j})$.  While if $t_j \notin \mathcal{T}_{s,i}$, the utility is defined as $U_{i,j}^N=[(1-\lambda)p_j \mu_{i,j}-E_{i,j}^\text{comp}] \mathds{1}_{i,j}$, and the total time consumption is $T_{i,j}^N=(T_{i,j}^\text{comp}+T_{i^*,j}^\text{comm})\mathds{1}_{i,j}$, where $T_{i^*,j}^\text{comm}$ is the transmission time of server submitting $t_j$.
Thus, we can get the utility function and time consumption as: if $t_j \in \mathcal{T}_{s,i}$, $U_{i,j}=U_{i,j}^I$ and $T_{i,j}=T_{i,j}^I$; otherwise, $U_{i,j}=U_{i,j}^N$ and $T_{i,j}=T_{i,j}^N$.




\subsection{Problem Formulation}
\label{problem_formulation}
Recall the goal of task allocation for achieving resource integration and optimal utilization in our system, we would like to make sure that more tasks can be proceed in time and thus servers can get more rewards. So we can formulate it into a multi-task allocation (MTA) problem as follows:
\begin{align*}
\mathbf{MTA:} &\mathop{\arg\max}_{\mathds{1}_{i,j}} \sum_{i=1}^n \sum_{j=1}^m U_{i,j} \\
\text{s.t.}:
\text{C1}&:\sum_{j=1}^{m} \mu_{i,j} \mathds{1}_{i,j} \leq \mu_i,  \\
\text{C2}&: T_{i,j}\leq \tau_{e,j},   \\
\text{C3}&:\sum_{j=1}^m \mathds{1}_{i,j}\leq 1,  \\
\text{C4}&: i\in\left\{1,2,\cdots,n\right\},j\in\left\{1,2,\cdots,m\right\}, 0< n \leq m. 
\end{align*}
In the above MTA problem, the optimization objective is to maximize the total utility of all servers given requested tasks from Metaverse users; 
C1 is the computing capacity constraint to make sure that the selected tasks can be proceeded by server $s_i$; C2 is the time constraint, consisting of computing time and transmission time to ensure that $t_j$ can be completed in time; C3  guarantees that one task can only be assigned to one server, and it is possible that some servers are not assigned with tasks; C4 defines the domain of this optimization problem.

\begin{theorem}
MTA is an NP-complete problem.
\end{theorem}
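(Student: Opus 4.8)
The plan is to establish NP-completeness in the standard two parts: membership in NP, and NP-hardness via a reduction from a known NP-complete problem. For membership, I would observe that given a candidate assignment of the indicator variables $\mathds{1}_{i,j}$, one can verify in polynomial time that constraints C1--C4 hold (each is a sum of at most $m$ or $n$ terms, or a simple inequality of precomputed quantities $\mu_{i,j}$, $T_{i,j}$, $\tau_{e,j}$), and one can compute the objective $\sum_{i}\sum_{j} U_{i,j}$ and compare it against a threshold $K$ for the decision version of MTA. Hence MTA $\in$ NP.

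For hardness, the natural target is the \emph{Multiple Knapsack Problem} (or equivalently Bin Packing / Partition), which is well known to be NP-complete. I would construct an instance reduction as follows: treat each server $s_i$ as a knapsack whose capacity is its computing budget $\mu_i$ (constraint C1), treat each task $t_j$ as an item whose ``weight'' is the CPU-cycle demand $\mu_{i,j}$ and whose ``value'' is the utility contribution $U_{i,j}$, and note that C3 enforces the knapsack-style restriction that each item goes into at most one knapsack. The time constraint C2 can be made vacuous in the reduced instance by choosing the deadlines $\tau_{e,j}$ large enough, so that the MTA instance faithfully encodes a Multiple Knapsack instance; a feasible packing of total value at least $K$ exists if and only if the MTA decision instance has a solution with objective at least $K$. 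Since the reduction is clearly polynomial-time (all parameters $\alpha_i,\theta_i,f_i,p_j,D_j$ etc. are set directly from the knapsack data), MTA is NP-hard, and together with membership in NP we conclude MTA is NP-complete.

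The main obstacle I anticipate is handling the conditional, server-dependent nature of the utility $U_{i,j}$ and the time $T_{i,j}$ cleanly: because $U_{i,j}$ takes the form $U_{i,j}^I$ or $U_{i,j}^N$ depending on whether $t_j \in \mathcal{T}_{s,i}$, and because $T_{i,j}^N$ depends on the transmission time of the \emph{originating} server, one must be careful to pick the reduced-instance parameters so that these branch-dependent terms collapse to the simple ``value of item $j$ in knapsack $i$'' that the Multiple Knapsack formulation needs. The cleanest way around this is, in the constructed instance, to let every task originate from a single dummy server (or from its assigned server) and to zero out the communication-related quantities (e.g. by taking $\lambda$ and the transmission energies appropriately), so that $U_{i,j}$ reduces to $p_j\mu_{i,j} - E_{i,j}^{\text{comp}}$, a quantity depending only on the pair $(i,j)$. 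Once that simplification is in place, the correspondence with Multiple Knapsack is immediate and the equivalence of yes-instances follows directly.
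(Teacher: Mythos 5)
Your proposal is correct and follows essentially the same route as the paper: a reduction from the multiple knapsack problem obtained by making the time constraint C2 vacuous and homogenizing the server-dependent quantities so that each pair $(i,j)$ collapses to a plain item-value/knapsack-capacity instance. In fact your write-up is somewhat more careful than the paper's sketch, since you also verify membership in NP (via the decision version with a threshold) and explicitly address the origin-dependent branches of $U_{i,j}$ and $T_{i,j}$, which the paper glosses over by simply ``removing'' the time constraints and assuming identical servers.
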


\begin{proof}
The MTA problem can be described as a complex multiple knapsacks problem (MKP), which has been proved to be NP-complete in~\cite{li2019multi}. Specifically, MKP is defined as: let $\mathcal{T'}=\left\{t_1,t_2,\cdots,t_{m'}\right\}$ denote the set of $m$ items, and let $\mathcal{S'}=\left\{s_1,s_2,\cdots,s_{n'}\right\}$ as the set of $n$ knapsacks, and $m'>n'$. Each item $t_i'$ has its weight $D_j'$ and value $p_j'$ and each knapsack has its maximum weight $D_i'$. The goal is to decide where should each item be placed in the knapsacks so that the total utility is maximized. In other words, it aims to solve the following problem: $\max_{\mathds{1}_{i,j}}\sum_j^m \sum_i^n p_j'\mathds{1}_{i,j} $, where $\mathds{1}_{i,j}$ is an indicator: when $\mathds{1}_{i,j} = 1$, item $t_j'$ will be assigned to knapsack $s_i'$, and otherwise, $\mathds{1}_{i,j} = 0$. 
The time complexity is $O(n^{m})$, which means MKP cannot be solved in polynomial time. MKP can be reduced to the simplified MTA problem. 
In the original MTA problem, we need to allocate multiple tasks to multiple servers with computing resource and time consumption constraints. The objective of MTA is to maximize the total utility, i.e.,  $\max_{\mathds{1}_{i,j}}\sum_j^n \sum_i^m U_{i,j}$. If we remove the time constraints and assume that every server offers the same computing and communication resources for each task, we get the simplified MTA problem which is equivalent to MKP. As we can see, the simplified MTA also has the time complexity of $O(n^{m})$, so MTA is an NP-complete problem.
\end{proof}



\section{Learning-based Solution for MTA}
\label{solution}
Based on the analysis in Section~\ref{problem_formulation}, we know that the MTA problem is NP-complete, so we need to design a computationally efficient algorithm to solve it. In this section, we design a learning-based algorithm with $\epsilon$-greedy strategy.

\subsection{Problem Reformulation}
To begin with, we need to reformulate the MTA problem based on the Q-learning algorithm~\cite{jiang2020q, chien2020q}. There are three main components in Q-learning, i.e., state space, action space, and reward function, which are detailed as follows.
\subsubsection{State Space} 

We denote the state space of the agent, i.e., the MEC server executing the Q-learning algorithm to allocate tasks, as $\Omega <\mathcal{T},\mathcal{P}, \mathcal{D}, \mathcal{T}_{e} >$. Specifically, $\mathcal{T}=\left\{t_1,t_2,\cdots,t_m\right\}$ is the set of tasks, 
$\mathcal{P}$ is the set of the unit prices of tasks, i.e., $\mathcal{P}=\left\{p_1,p_2,\cdots,p_m\right\}$,  $\mathcal{D}=\left\{D_1,D_2,\cdots,D_m\right\}$ is the set of data sizes of tasks, and the set of execution time is defined as $\mathcal{T}_{e}=\left\{\tau_{e,1},\tau_{e,2},\cdots,\tau_{e,m}\right\}$. In other words, the state space is composed of all tasks with the corresponding descriptions, including prices, data sizes, and execution time requirements. Thus, $\Omega$ is a matrix with $m$ rows and 4 columns. The agent selects an action for each state based on the current task requirements and resource conditions of all servers. Once the action is chosen at a state, the agent will turn to the next state to conduct action selection.

\subsubsection{Action Space} 
Since servers have different amount of available resources, such as total CPU cycles, CPU cycle frequencies, and communication bandwidth, we can denote the action space as $\mathcal{A}<\mathcal{S},\mathcal{M},\mathcal{F},\mathcal{B},\mathcal{H},\mathcal{G},\alpha,\Theta >$. In detail, $\mathcal{S}$ is the set of servers, and $\mathcal{M}=\left\{\mu_1,\mu_2,\cdots,\mu_n\right\}$ is the set of total available CPU cycles, and $\mathcal{F}=\left\{f_1,f_2,\cdots,f_n\right\}$ is the set of CPU frequencies, and $\mathcal{B}=\left\{B_1,B_2,\cdots,B_n\right\}$ is the set of communication bandwidth; as for $\mathcal{H}=\left\{H_1,H_2,\cdots,H_n\right\}$ and $\mathcal{G}=\left\{G_1,G_2,\cdots,G_n\right\}$, they are the sets of transmission power and channel power gain, respectively; and $\alpha=\left\{\alpha_1,\alpha_2,\cdots,\alpha_n\right\}$ is the set of the parameter correlated to the CPU architectures and $\Theta=\left\{\theta_1,\theta_2,\cdots,\theta_n\right\}$ is the set of CPU cycles required for processing one data sample.
Specifically, there are two statuses for each action, i.e., \textit{selected} and \textit{not selected}, and the agent can choose only one action in one state while one action can be selected multiple times in all the states. This is to ensure that one task can only be assigned to one server; however, one server could process multiple tasks if it has sufficient resources. In this way, we can know that the action space is an $n\times8$ matrix. 

\subsubsection{Reward Function}
\label{reward_function}
The objective of MTA problem is to maximize the utility by allocating tasks to proper servers, so the rewards here are defined by the utilities.
In other words, the rewards are determined by the allocation decision, resource conditions, and time constraints. Besides, the rewards would also be affected by where the task is submitted from to the blockchain network. Thus, in the design of the reward function, we need to consider all of these aspects. 

We evaluate whether our system can process $t_j$ by two criteria: 1) server $s_i$ has enough computing power to be devoted to the computation, i.e., $\mu_{i,j}\leq\mu_i$; and 2) $s_i$ is able to process the task $t_j$ within the required time constraint, i.e., $T_{i,j}\leq \tau_{e,j}$. The second criteria is C2 while the first criteria is a weak C1. And C1 can only be used when selecting actions and updating Q-value, which will be discussed in Section~\ref{learning process}.


If $t_j\in T_{s_i}$, the reward function can be expressed as:
\begin{align}
U_{i,j}=
\begin{cases}
p_j \mu_{i,j}-E_{i,j}^\text{comp},& \mu_{i,j}\leq\mu_i\ \text{and}\ T_{i,j} \leq \tau_{e,j},\\
\lambda p_j \mu_{i,j}-E_{i,j}^\text{comm},& \text{otherwise}.
\end{cases}
\label{reward_1}
\end{align}

When $t_j$ is from $s_i$, if $s_i$ is not assigned to process $t_j$, 
it has to transmit $t_j$ to another server and obtain some intermediary fee, so the reward is $\lambda p_j \mu_{i,j}-E_{i,j}^\text{comm}$; but if $s_i$ is able to process $t_j$, it can get the payment $p_i \mu_{i,j}$ at the cost of the computing energy consumption $E_{i,j}^\text{comp}$, and thus, the reward is $U_{i,j}=p_i \mu_{i,j}-E_{i,j}^\text{comp}$.

If $t_j\notin T_{s_i}$, the reward function is as below:
\begin{align}
U_{i,j}=\begin{cases}
(1-\lambda)p_j \mu_{i,j}-E_{i,j}^\text{comp},&\mu_{i,j}\leq\mu_i\ \text{and}\ T_{i,j} \leq \tau_{e,j},\\
0,& \text{otherwise}.
\end{cases}
\label{reward_2}
\end{align}

The logic of (\ref{reward_2}) is similar to (\ref{reward_1}): when $t_j$ is not from $s_i$, if $s_i$ has the capability to process $t_j$, it can get the reward the same as in (\ref{reward_1}) but has to pay the intermediary fee; and it will get nothing if it cannot process this task, which means that $A_i$ as one action cannot be chosen in state $R_j$.

Based on the above analysis, we know that the reward functions are the transformation of the objective function under certain constraints. For simplicity, we summarize the calculation of reward functions in Algorithm~\ref{al_11}. To get an $n \times m$ matrix $\mathcal{U}=\left\{U_{1,1},U_{1,2},\cdots,U_{i,j},\cdots,U_{n,m}\right\}$ containing the reward value for each state and each action,  we need to calculate the energy consumption and time cost for both computing and communication processes (Lines 3-6). And then we can calculate and return the rewards based on (\ref{reward_1}) and (\ref{reward_2}) (Lines 7-25).

\begin{algorithm}
\caption{StateActionReward} 
\label{al_1}
\begin{algorithmic}[1]
\REQUIRE $\Omega$, $\mathcal{A}$
\ENSURE $\mathcal{U}$
\FOR{$i\in \{1,\cdots,n\}$}
\FOR{$j \in \{1,\cdots,m\}$}
\STATE $E_{i,j}^\text{comp}$ $\leftarrow$ $\alpha_i \mu_{i,j} f_{i}^2$
\STATE $T_{i,j}^\text{comp}$ $\leftarrow$ $\frac{\mu_{i,j}}{f_{i}}$
\STATE $T_{i,j}^\text{comm}$ $\leftarrow$ $\frac{D_j}{r_{i}}$
\STATE $E_{i,j}^\text{comm}$ $\leftarrow$ $H_{i} T_{i,j}^\text{comm}$
\IF{$s_j \in T_{s_i}$}
\STATE $T_{i,j} \leftarrow T_{i,j}^\text{comp}\mathds{1}_{i,j}+T_{i,j}^\text{comm} (1-\mathds{1}_{i,j})$
\IF{($\mu_{i,j}\leq \mu_i$) and ($T_{i,j}$ $\leq$ $\tau_{e,j}$)}
\STATE $U_{i,j} \leftarrow p_j \mu_{i,j}-E_{i,j}^\text{comp}$
\ELSE
\STATE$U_{i,j} \leftarrow \lambda p_j \mu_{i,j}-E_{i,j}^\text{comm}$
\ENDIF
\ENDIF
\IF {$s_j \notin T_{s_i}$}
\STATE $T_{i,j}=(T_{i,j}^\text{comp}+T_{i^*,j}^\text{comm})\mathds{1}_{i,j}$
\IF{($\mu_{i,j}\leq \mu_i$) and ($T_{i,j}$ $\leq$ $\tau_{e,j}$)}
\STATE $U_{i,j}\leftarrow (1-\lambda)p_j \mu_{i,j}-E_{i,j}^\text{comp}$
\ELSE
\STATE$U_{i,j} \leftarrow 0$
\ENDIF
\ENDIF
\ENDFOR
\ENDFOR
\RETURN $\mathcal{U}$
\end{algorithmic}
\label{al_11}
\end{algorithm}

\subsection{Learning Process}
\label{learning process}

\subsubsection{Available Actions}
To avoid selecting unmatched servers that cannot process such tasks as actions, 
and to improve the learning efficiency by reducing the time complexity, we need to clarify which set of actions are available in each state before selecting one as the action. Thus, we define the concept of available actions as below.

\begin{definition}[Available Actions] 
\label{ava_actions}
Assume the agent is at state $\Omega_j$, $\mathcal{A}_{j}^{ava}=\left\{s_1,s_2,\cdots,s_{n'}\right\}$ is the set of available actions, i.e., servers, that can process $t_j$ under the constraints of $\mathrm{C1}-\mathrm{C4}$, with $n'\leq n$ being the number of available actions. 
\end{definition}

Since the transfer of states is a dynamic process and the states will affect each other, one of the most direct effects is that once an action is selected, its computing power will be reduced and therefore will constrain the action selection of the subsequent states. Therefore, we evaluate the available actions of each state one by one. 
Recall the discussions in Section~\ref{reward_function}, we get an $n\times m$ rewards table $\mathcal{U}$, and we can use $\mathcal{U}_j=\left\{U_{1,j},U_{2,j},\cdots,U_{n,j}\right\}$ to denote all the possible rewards of $\Omega_j$.
Generally speaking, $\mathcal{U}_j$ has three types of value: positive, zero, and negative. We can select those actions with non-zero values as the available actions. However, this naive method can lead to some severe consequences. For example, one server is selected multiple times due to its non-zero value (this could happen in Q-learning, especially when the number of servers is small), but its capacity of available computing resources is not enough to handle all of these tasks even if the reward for each task is positive or optimal.

To address this challenge, we design a dynamic table 
to record the accumulative $\mu_{i,j}$, which can be denoted by $\mu_{i,j}^{acc}$. This table shares the same dimensions with the rewards table $\mathcal{U}$ generated in Algorithm~\ref{al_11}. Hence, we can design the following mechanism to get the available actions for each state.

First, naively select the actions with non-zero values from $\mathcal{U}$ and initialize the values as zero to reduce the computational cost by avoiding going through the whole action space again. Let $\mathcal{A}_j^{nz}=\left\{A_{1,j}^{nz},A_{2,j}^{nz},\cdots,A_{n',j}^{nz}\right\}$ be the set to contain the actions with non-zero values at state $\Omega_j$. In other words, the agent only needs  to check the actions in $\mathcal{A}_j^{nz}$ at state $\Omega_j$.
Then, based on the selected actions, we can get a value of $\mu_{i,j}$. In this way, after multiple rounds, we can get the accumulative value $\mu_{i,j}^{acc}$. This is a dynamic process, which means that only those selected actions can contribute to their corresponding accumulative values. Besides, we need to consider the time constraint, ensuring that one server can compute the allocated tasks in time. We use $\tau_{e,i}^{ava}$ to demonstrate the available computing time for the actions in $\mathcal{A}_{j}^{nz}$, which is calculated via $\tau_{e,i}^{ava}=(\mu_{i}-\mu_{i,j}^{acc})/f_{i}$.
Next, we can decide whether an action ${A}_{i,j}^{nz}$ is available by the following rules: if $\mu_{i,j}^{acc}<\mu_i$ and $\tau_{e,j}^{ava}>\tau_{e,j}$, then $\mathcal{A}_{i,j}^{nz}$ is considered as available action; otherwise, ${A}_{i,j}^{nz}$ will be discarded from $\mathcal{A}_j^{nz}$. In this way, we can get a set of available actions $\mathcal{A}_j^{ava}$ at state $\Omega_j$, and the final action ${A}_j$ should be chosen from it according to the selection policy which will be discussed in Section~\ref{action_selection}.

\begin{algorithm}[b!]
\caption{StateAvailableActions} 
\begin{algorithmic}[1]
\REQUIRE $\mathcal{A}$, $\Omega_j$, $\mathcal{U}$
\ENSURE ${A}_j$
\STATE Initialize $\mathcal{A}$, $\mathcal{A}_j^{ava}$, $\mu_{i,j}^{acc}$
\FOR {$i \in \{1,\cdots,n\}$}
\IF {$U_{i,j} \neq 0$}
\STATE Append $A_i$ into $\mathcal{A}_j^{nz}$
\ENDIF
\ENDFOR
\STATE Choose $A_{i,j}^{nz}$ from $\mathcal{A}_j^{nz}$
\STATE $\tau_{e,i}^{ava}\leftarrow (\mu_i-\mu_{i,j}^{acc})/f_{i}$
\IF {$A_{i,j}^{nz} \in \mathcal{A}_j^{ava}$}
\IF {($\mu_{i,j}^{acc} > \mu_i$) or ($t_{i,j}^{ava} < \tau_{e,j}$)}
\STATE Remove $A_{i,j}^{nz}$ from $ \mathcal{A}_j^{ava}$
\STATE $A_j$ $\leftarrow$ choose another action from $ \mathcal{A}_j^{nz}$
\ENDIF
\ELSE
\STATE Append $A_{i,j}^{nz}$ into $\mathcal{A}_j^{ava}$
\STATE ${A}_j \leftarrow A_{i,j}^{nz}$
\ENDIF
\STATE $\mu_{i,j}^{acc}$ $\leftarrow$  $\mu_{i,j}^{acc}+\mu_{i,j}$
\RETURN ${A}_j$
\end{algorithmic}
\label{al_2}
\end{algorithm}

The whole process of choosing available actions for a specific state is summarized in Algorithm 2. At the beginning, we initialize $\mathcal{A}_j^{nz}$, $\mathcal{A}_j^{ava}$, and $\mu_{i,j}^{acc}$ (Line 1), and we get the non-zero values from $\mathcal{U}$ (Lines 2-6). Next, we choose an action from $\mathcal{A}_j^{nz}$ and calculate its available computing time (Lines 7-8).
Then, we can determine whether the selected action is available or not (Line 9-17), and we update $\mu_{i,j}^{acc}$ (Line 18).
Finally,  we get the available action ${A}_j$ at state $\Omega_j$. Please note that the selection method in Lines 7 and 12 will be further discussed in Section~\ref{action_selection}. In general, Algorithm~\ref{al_2} is the transformation of $\mathrm{C1}$ and $\mathrm{C2}$ in the MTA problem.

\subsubsection{The Update of Q-table}
\label{action_selection}


The Q-table is applied during the learning process, which is used to facilitate the action selection. First, we generate an $m\times n$ matrix as the Q-table $Q(\Omega, \mathcal{A})$, and initialize its value with 0. Then, we need to update the Q-value at each state. Here we use the following equation to update Q-value:
\begin{align*}
    &Q(\Omega_j, A_i)\\
    &=Q(\Omega_j, A_i)+\alpha[U_{i,j}+\gamma \max Q(\Omega'_j, A_j^{ava})-Q(\Omega_j, A_i)],
\end{align*}
where $Q(\Omega'_j, A_j^{ava})$ means all the possible Q-value at next stage; $0\leq \alpha \leq 1$ is the learning rate and $0\leq \gamma \leq 1$ is the discount factor.

\begin{algorithm}[b!]
\caption{Learning-based Algorithm for the MTA Problem} 
\begin{algorithmic}[1]
\REQUIRE $\Omega, \mathcal{A}, K$
\ENSURE $TS^*$
\STATE Initialize $Q(\Omega,\mathcal{A})$, $k$
\STATE $\mathcal{U} \leftarrow$ StateActionReward($\Omega, \mathcal{A}$)
\STATE $ A_{j}^{nz} \leftarrow$ actions with non-zero values in $\mathcal{U}$ 
\STATE Generate a random value $x$
\IF{$x \leq \epsilon$}
\STATE $A_j \leftarrow$ randomly selected from $\mathcal{A}_{j}^{nz}$
\ELSE 
\STATE $A_j$ $\leftarrow$ $\max Q(\Omega_j,.)$
\ENDIF 
\FOR{$k \in \{1,\cdots,K\}$}
\FOR{$j \in \{1,\cdots,m\}$}
\STATE $A_j \leftarrow$ StateAvailableActions($\mathcal{A}$, $\Omega_j$, $\mathcal{U}$)
\STATE $Q(\Omega_j,A_i) \leftarrow Q(\Omega_j,A_i)+\alpha[U_{i,j}+\gamma \max Q(\Omega'_j, \mathcal{A}_j^{ava})-Q(\Omega_j, A_i)]$
\ENDFOR
\STATE $TS \leftarrow $ the sum of all rewards in this episode
\ENDFOR
\STATE $TS^* \leftarrow \max TS$
\RETURN $TS^*$
\end{algorithmic}
\label{al_3}
\end{algorithm}

The agent selects the action based on the Q-value, and we adopt the $\epsilon$-greedy algorithm as the policy of action selection. The $\epsilon$-greedy strategy selects the action with the largest expected rewards most of the time, and the parameter $\epsilon$ balances exploration and exploitation. We can use a larger value of $\epsilon$ to allow the agent to exploit what have been learned, and the agent will explore more actions that have not been learned with a smaller $\epsilon$. 
The learning process requires multiple rounds so that the agent can learn more about the rewards and find the best solution. For each episode $k\in \left\{1,2,\cdots,K\right\}$ where $K$ is the total episodes of learning, we let the agent go through all the states and find a solution, and then we compare all the solutions $TS$ and choose the one $TS^*$ with the highest sum of rewards as the optimal solution.

Algorithm 3 is the detailed process of the learning-based solution for the MTA problem. We first initialize $Q(\Omega,\mathcal{A})$, $k$, and get $U_{i,j}$ via Algorithm~\ref{al_11} (Lines 1-2). The $\epsilon$-greedy strategy is implemented to select the action (Lines 3-9). Then, the learning process will last until episode $k$ reaches the predefined total number of episodes $K$ (Lines 10-16), and the optimal solution will be obtained and returned (Line 17-18).

\subsection{Complexity Analysis}

The learning-based algorithm comprises three sub-algorithms, so the complexity analysis needs to take all of them into account. First, the computational complexity of Algorithm~\ref{al_11} is $O(m\times n)$, and the complexity of Algorithm~\ref{al_2} is $O(n)$. As for the time complexity of Algorithm~\ref{al_3}, it is $O(k\times n)$. Thus, in general, the time complexity of the learning-based solution should be $O(m\times n)=O(k\times n)+O(n)+O(m\times n)$, which means that we can solve the MTA problem in polynomial time.

\section{Experimental Evaluation}
\label{experiments}


\subsection{Experimental Setting}

In our experiments\footnote{The code is available in: {https://github.com/wzljerry/Blockchain-based-Edge-Resource-Sharing-for-Metaverse}}, we consider a blockchain-based MEC system in Metaverse with 20 servers and 50 offloading tasks as the primary setting. We change relevant parameters to analyze the impacts of the numbers of servers and tasks on the total rewards and the time complexity, as well as the influence of Q-learning parameters on the convergence. For clarity, we summarize the basic parameter settings in Table~\ref{tb_1}.

\begin{table}[h!t]
\centering
\caption{Basic Parameter Settings}
\begin{tabular}{|l|l|l|l|}
\hline
$n=20$                       & $m=50$                       & $D_i=[200,400]$       & $\theta=0.01 $   \\ \hline
$p_j=[1,10]$                & $D_j=[10,20]$               & $\tau_{e,j}=[1,100]$    &    $\gamma=0.9$             \\ \hline
$f_i=[1,10]$                & $H_i=[5,10]$                & $G_i=[5,10]$      &$\alpha=0.01$            \\ \hline
$\delta=0.01$ & $B_i=[5,10]$                & $\epsilon=0.9$  & $K=500$  \\ \hline
\end{tabular}
\label{tb_1}
\end{table}


\subsection{Experimental Results}

First, we design experiments to prove the effectiveness of our proposed learning-based algorithm for the MTA problem. 
To that aim, we compare the learning-based solution with two benchmark algorithms, i.e., the random allocation algorithm and the greedy-based algorithm. Specifically, The random allocation algorithm allocates each task to one server randomly; and the greedy-based algorithm selects the server with the maximum reward for each task.
We run these three algorithms with different numbers of servers and tasks to obtain  
the comparison results as shown in Fig.~\ref{fig_1}. From Figs.~\ref{fig_1}(a) and~\ref{fig_1}(b), we can see that the learning-based algorithm can always outperform the other two algorithms with higher rewards. The random strategy fluctuates a lot due to the randomness of each selection. The greedy and learning algorithms, in contrast, perform more smoothly. Intuitively, more servers will not result in higher total rewards; however, more tasks would increase the overall revenue. But from our experimental results, this trend does not hold. There are at least three reasons for it. First, C1--C4 constrain the allocation decisions, resulting in dynamic decisions as the number of servers or tasks changes, which does not have a cumulative effect on rewards; second, although more tasks result in more rewards, they also increase the cost of computation and communication; and third, servers and tasks are heterogeneous with different parameters, exacerbating the results from the first two reasons. 

\begin{figure}[h]
\centering
\subfigure[The Number of the MEC Servers]{
\includegraphics[width=0.465\linewidth]{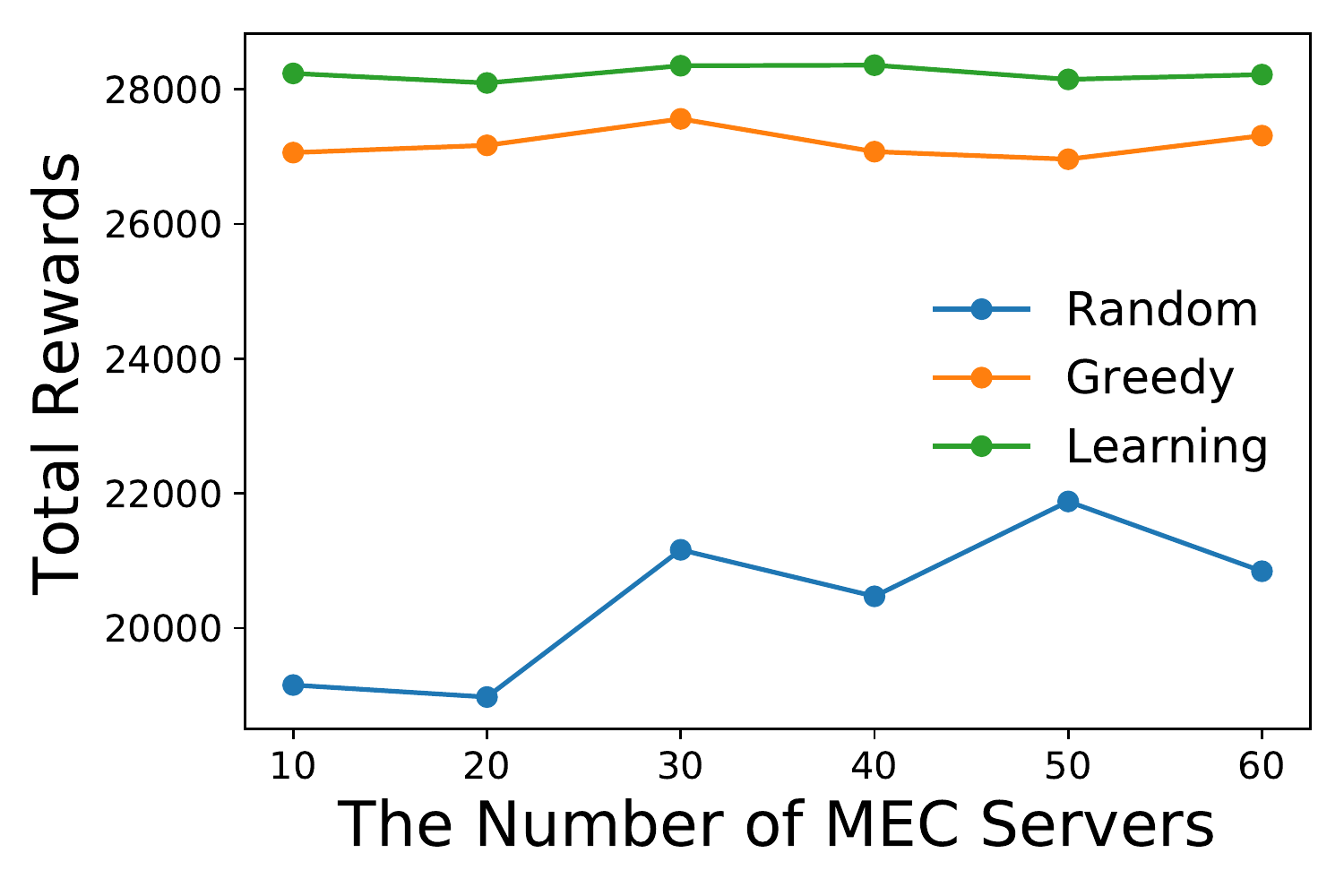}}\hfill
\subfigure[The Number of Offloading Tasks]{
\includegraphics[width=0.465\linewidth]{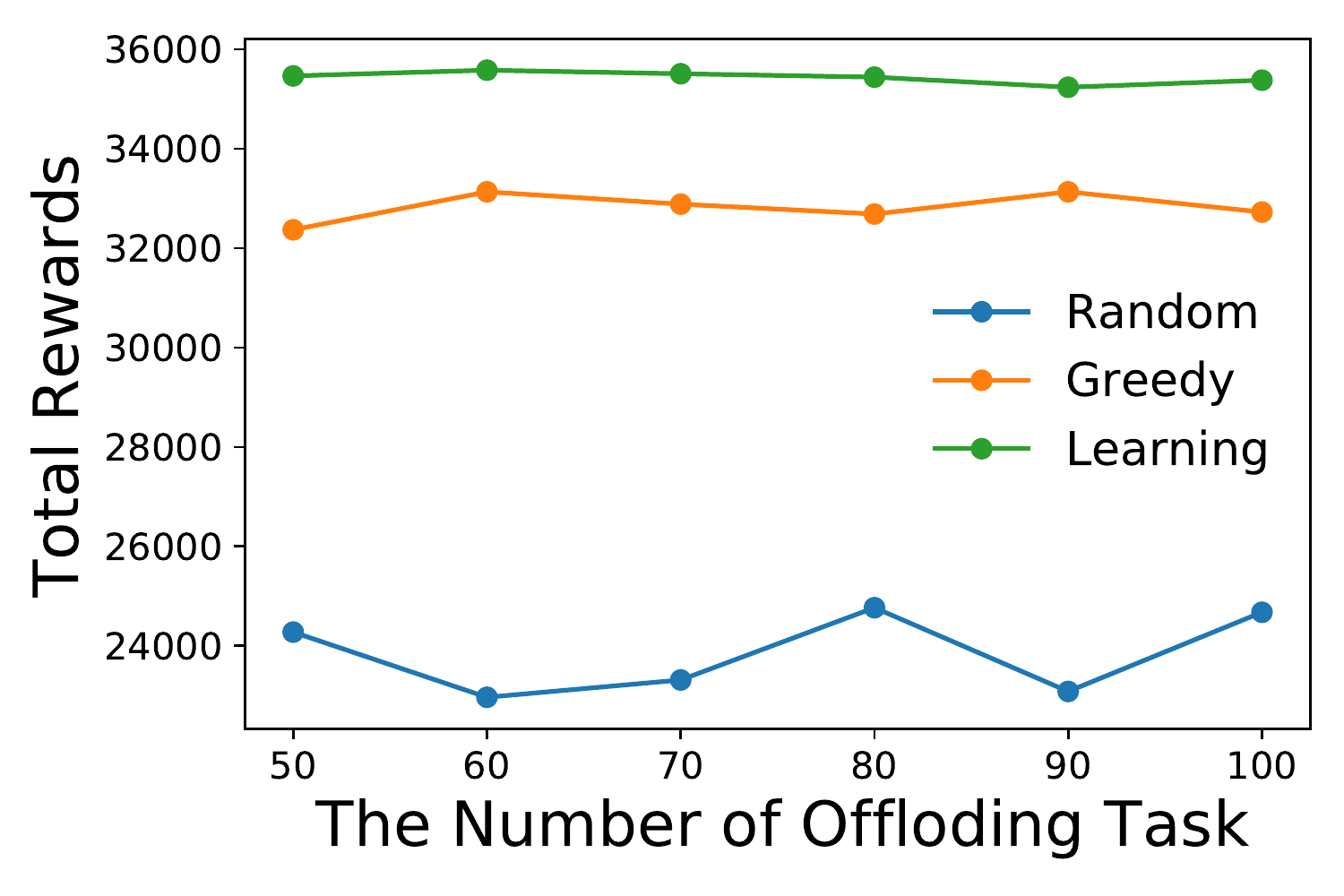}}
\caption{The influences of the numbers of MEC servers and offloading tasks on the total rewards.}
\label{fig_1}
\end{figure}

Then, we explore the influences of unit prices and data sizes of offloading tasks on the total rewards. We increase $p_j$ and $D_j$ with the percentage from 10\% to 100\%, and the results are shown in Fig.~\ref{fig_5}. It is clear that both the increase of $p_j$ and $D_j$ will affect the total rewards.  When the unit prices are larger, the users will pay more to the system, so the total rewards will increase. And, if the data size of $t_j$ is larger, then the CPU cycles required to process $t_j$ will also be increased, and thus the total rewards  will be higher.

\begin{figure}[h]
\centering
\subfigure[The Increase Percentage of Unit Price]{
\includegraphics[width=0.465\linewidth]{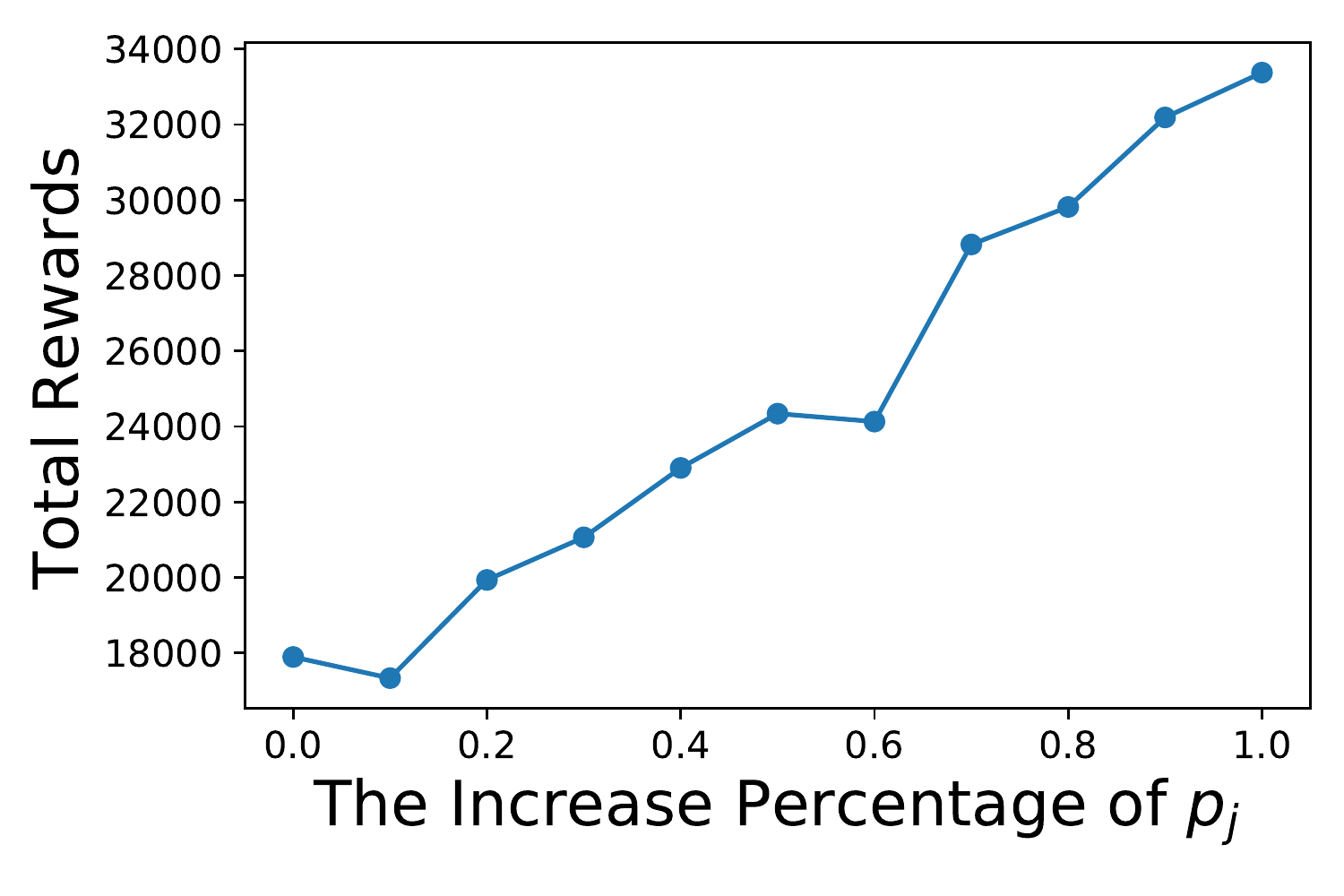}}\hfill
\subfigure[The Increase Percentage of Data Size]{
\includegraphics[width=0.465\linewidth]{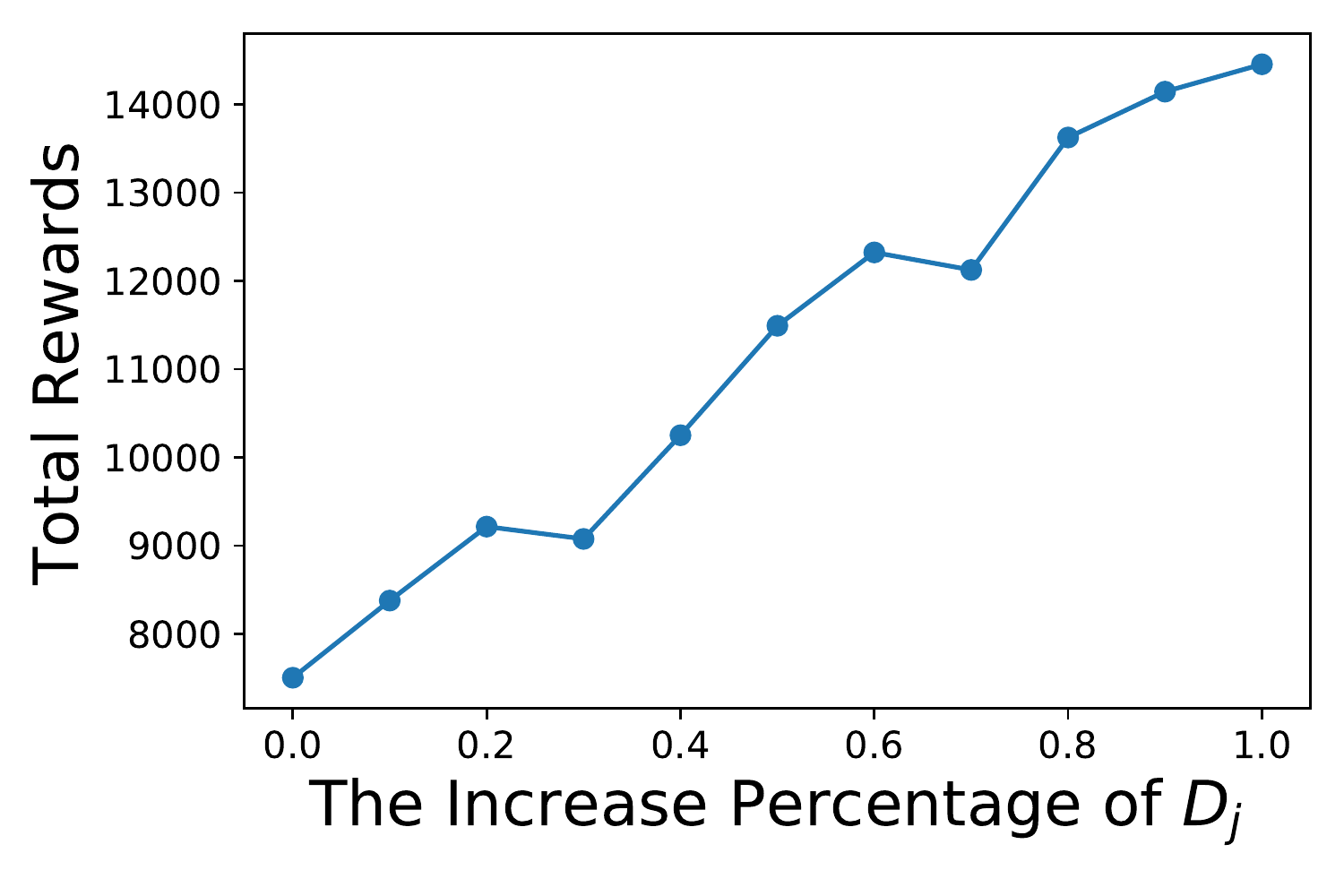}}
\caption{The influences of unit prices and data sizes of offloading tasks on the total rewards.}
\label{fig_5}
\end{figure}



Last but not least, we mainly examine the convergence performance of the learning algorithm. Q-learning is determined by two main variables, i.e., $\alpha$ and $\gamma$. 
From Fig.~\ref{fig_2}(a), we can see that when $\alpha$ is smaller, the current choice is more influenced by experience and lacks further exploration, so although it achieves a high payoff at the beginning, the convergence rate does not improve effectively. 
From Fig.~\ref{fig_2}(b), we can see that the larger the value of $\gamma$, the better the convergence. This is because $\alpha$ takes into account the effect of future rewards on current choices, so as $\gamma$ becomes larger, it emphasizes rewards more and thus learns the optimal combinations more efficiently to speed up the convergence. 

\begin{figure}[h!t]
\centering
\subfigure[The Learning Rate $\alpha$.]{
\includegraphics[width=0.465\linewidth]{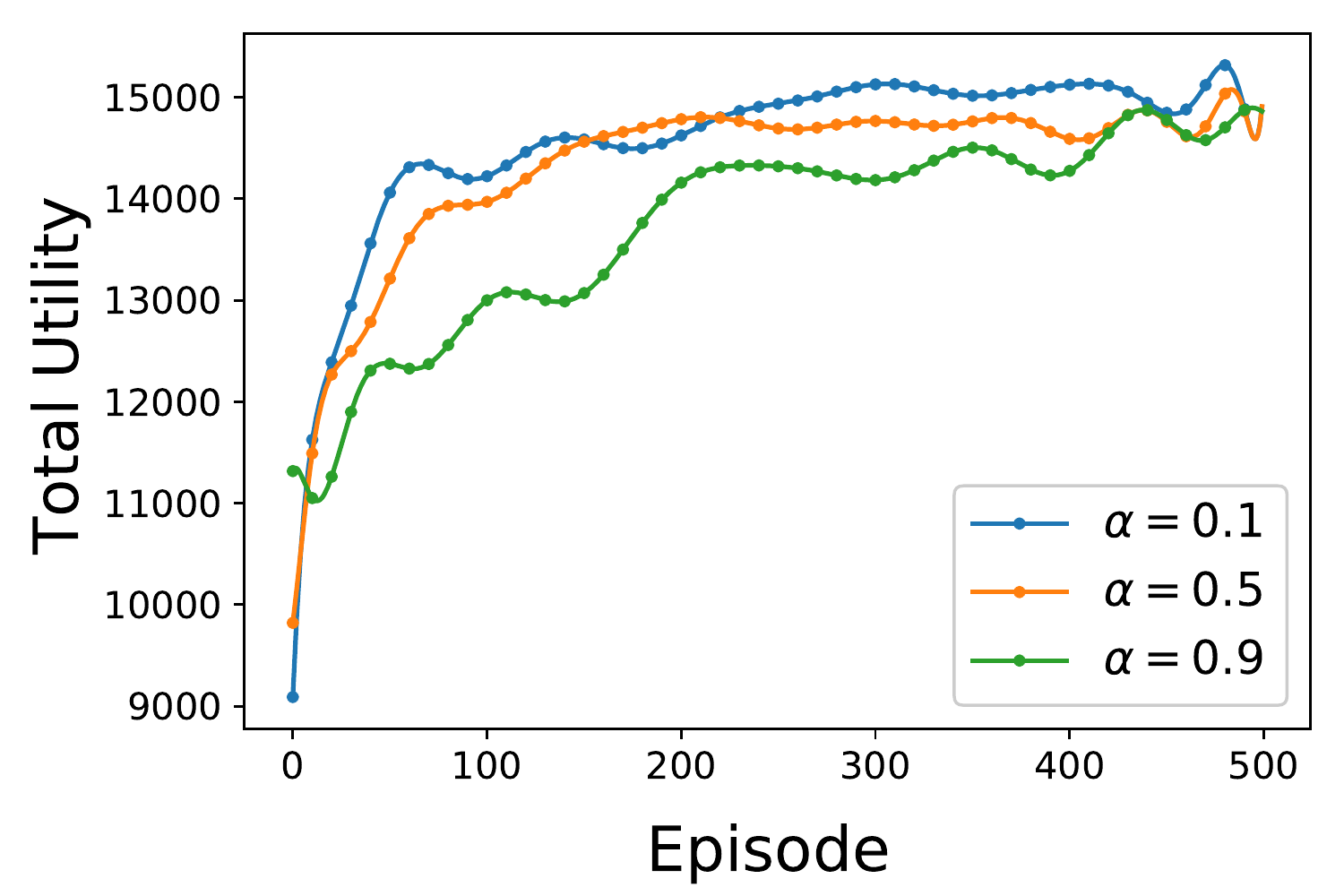}}\hfill 
\subfigure[The Discount Factor $\gamma$.]{
\includegraphics[width=0.465\linewidth]{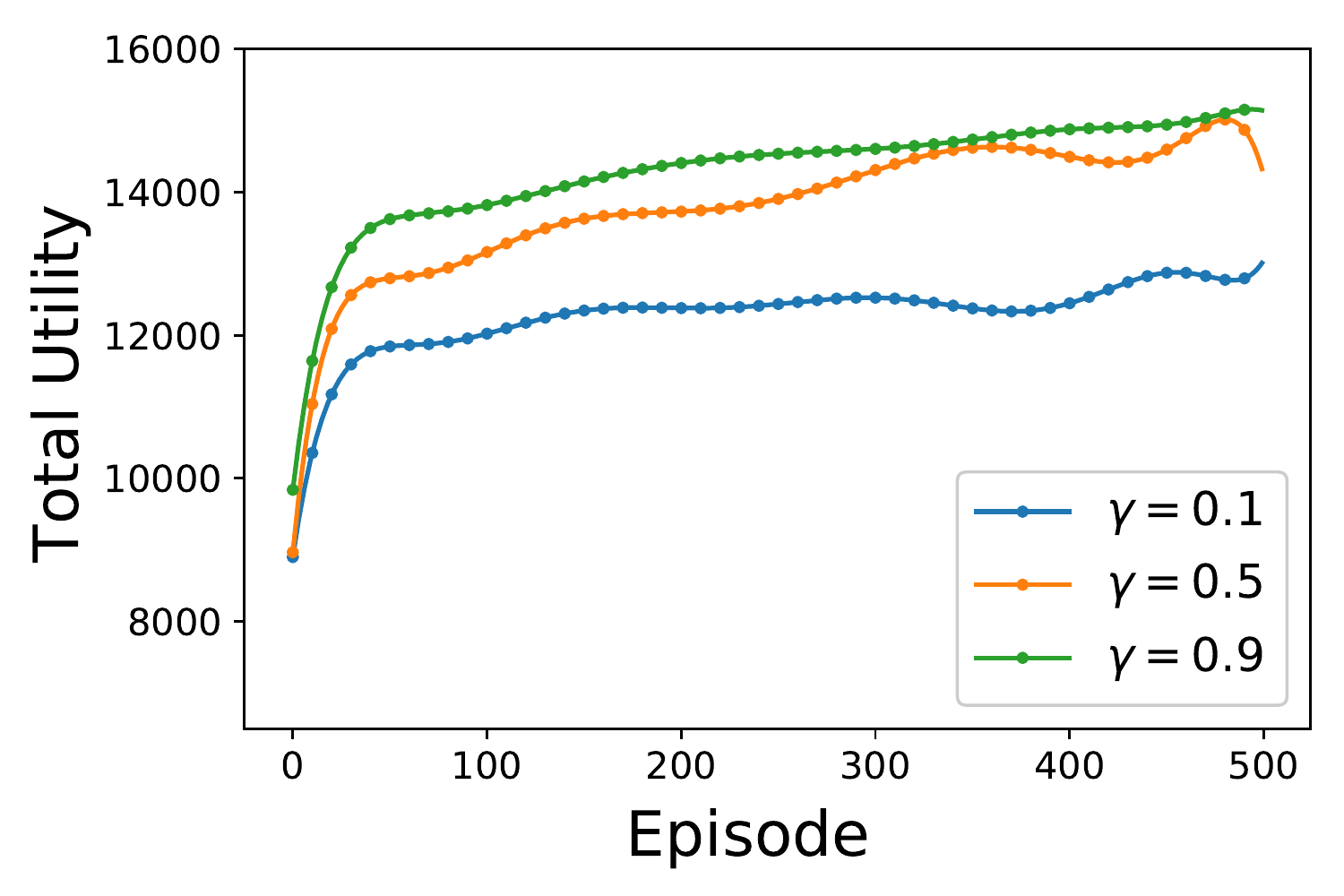}}
\caption{The convergence of the learning-based algorithm.}
\label{fig_2}
\end{figure}

\section{Conclusion}
\label{conclusion}
In this paper, we establish a blockchain-based MEC platform for resource sharing and optimization to facilitate Metaverse applications. In particular, we design a task allocation scheme to assist our proposed system. To that aim,  a learning-based algorithm is proposed to help the system make task allocation decisions in polynomial time. Numerous experiments prove that our proposed system and algorithms are efficient. 


\bibliographystyle{IEEEtran}
\bibliography{reference.bib}

\end{document}